\newtheorem{theorem}{Theorem}
\newtheorem{definition}[theorem]{Definition}
\newtheorem{corollary}[theorem]{Corollary}
\begin{document}

\title{The Min-entropy as a Resource for One-Shot\\ Private State Transfer, Quantum Masking and State Transition}

\author{Seok Hyung Lie}
\author{Seongjeon Choi}
\author{Hyunseok Jeong}
\affiliation{%
 Department of Physics and Astronomy, Seoul National University, Seoul, 151-742, Korea
}%

\date{\today}

\begin{abstract}
We give an operational meaning to the min-entropy of a quantum state as a resource measure for various interconnected tasks. In particular, we show that the min-entropy without smoothing measures the amount of quantum information that can be hidden or encoded perfectly in the one-shot setting when the quantum state is used as a randomness/correlation source. First, we show that the min-entropy of entanglement of a pure bipartite state is the maximum number of qubits privately transferable when the state is used as quantum one-time pad. Then, through the equivalence of quantum secret sharing(QSS)-like protocols, it is also shown that the min-entropy of a quantum state is the maximum number of qubits that can be masked when the state is used as a randomness source for a quantum masking process. Consequently we show that the min-entropy of a quantum state is the half of the size of quantum state it can catalytically dephase.This gives a necessary and sufficient condition for catalysts for state transition processes.
\end{abstract}

\pacs{Valid PACS appear here}
\maketitle

\section{Introduction}
The space of quantum correlation is very vast. The dimension of a collection of many quantum systems is much larger than the sum of the dimension of each system. It has motivated researches on the method of encoding information within global quantum state without altering local quantum systems. Such efforts have appeared under many names; quantum error correcting codes \cite{steane1996multiple, lidar2013quantum}, quantum secret sharing\cite{gottesman2000theory, cleve1999share}, quantum masking \cite{modi2018masking, li2020probabilistic,li2018masking} and private state transfer\cite{schumacher2006quantum, oppenheim2012quantum}.

Among these tasks, quantum secret sharing (QSS) \cite{cleve1999share} is especially important since, as we will see, it subsumes many other similar tasks. QSS is the task of distributing an arbitrary quantum state to multiple parties in a fashion that only authorized subsets of them can restore the quantum state. Each local party's marginal state (`share') of a QSS scheme should have a constant form regardless of the quantum secret. 
Typically each share of a QSS scheme is a quantum state that should be stored in a quantum system. However, it is still demanding to maintain a large size of quantum system protected from noise and error. Thus, estimating and optimizing the informational size of the each share is critical, since it is directly related to the required physical size of the storage medium to contain each share.

The informational size of a quantum system is decided by how random the system is. There have been studies on lower bounds of the amount of randomness of each share in QSS scheme. The R\'{e}nyi entropy, defined as
\begin{equation}
    S_\alpha(\rho)\equiv \frac{1}{1-\alpha}\log_2\Tr[\rho^\alpha],
\end{equation}
and their limits, the max-entropy $S_{\max{}}(\rho) \equiv \log_2\text{rank}(\rho)= \lim_{\alpha\to0}S_\alpha(\rho)$, the min-entropy $S_{\min{}}(\rho) \equiv -\log_2\|\rho\| = \lim_{\alpha\to\infty} S_\alpha (\rho)$ and the von Neumann entropy $S(\rho)\equiv -\Tr\left[\rho \log_2 \rho \right] =\lim_{\alpha\to1} S_\alpha(\rho)$ are often used to quantify the randomness within a quantum state $\rho$. In Ref. \cite{gottesman2000theory}, it was proven that, for arbitrary secret sharing scheme for $d-$dimensional quantum secret, the dimension of each share of secret must be at least as large as the dimension of the secret itself. This provides a lower bound for the max-entropy of each share, i.e. $S_{\max{}} (\sigma) \geq \log_2 d$. In Ref. \cite{imai2005information, lie2019unconditionally}, the result was improved to provide a lower bound of the von Neumann entropy of each share, i.e. $S(\sigma) \geq \log_2 d$. Note that the R\'{e}nyi entropy monotonically decreases as $\alpha$ grows \cite{hayashi2017quantum}.

The problem, however, was not closed, since the optimality of the lower bound was not proved. Can any quantum state with the von Neumann entanglement entropy larger than $\log_2 d$ be a marginal state of a QSS scheme? If not, when is it possible?

In this work, we show that this question is intimately related to other questions about the amount of required resources for many other important quantum information processing tasks. We then close this problem by giving the \textit{min-entropy} of quantum state operational meanings as the power for tasks such as private state transfer, quantum masking and implementation of dephasing map.

For example, for the quantum masking \cite{modi2018masking, lie2019unconditionally, lie2020randomness}, the task of hiding quantum information into bipartite quantum correlation using randomness source and bipartite interaction, the amount quantum information that can be masked by a randomness source is given by its min-entropy. For the private state transfer \cite{schumacher2006quantum}, the task of transmitting a quantum state without giving any information to a potential eavesdropper by utilizing pre-established quantum correlation, the amount of privately transferable quantum information is determined by the min-entropy of the marginal state of the pre-established pure bipartite state.

In doing so, we introduce a \textit{deterministic} method of randomness extraction from a weak quantum randomness source, i.e. a mixed state with high enough min-entropy but having non-uniform eigenvalues, utilizing the Nielsen theorem \cite{nielsen1999conditions}.

These results imply an important consequence for state transition processes under the constraint that randomness is not free, which is deeply related to quantum thermodynamics \cite{gour2015resource, muller2018correlating}. We completely characterize the randomness sources that can dephase a given size of quantum system. As a direct consequence, we drive a necessary and sufficient criterion for the possibility of state transition process with a given \textit{catalyst} \cite{boes2018catalytic}.

\section{Private state transfer} 

First, we give the definition of encoding schemes for faithful one-shot private state transfer (PST). Consider a situation in which two parties, Alice and Bob, has pre-distributed entangled state $\ket{\Psi}_{AB}$. Alice encodes her possibly unknown quantum state $\psi$ by making $\psi$ interact with her part of $\ket{\Psi}_{AB}$. It results in the secret encoding channel $\Phi_\psi$ acting on the system $A$ of $\ket{\Psi}_{AB}$. Then Alice transmits the system $A$ to Bob over a quantum channel. However, to make the secret remain private, any possible eavesdropper seizing the transmitted state $\Phi_\psi(\Tr_B \dyad{\Psi}_{AB})$ should gain no information at all about the state $\psi$. To finish the transmission, there also should be a recovery map that can recover $\psi$ from $(\Phi_\psi \otimes \mathcal{I})(\dyad{\Psi}_{AB})$. We will focus on the case where this recovery map exactly recovers the secret state, in contrast to \textit{approximate} recovery. Now we give the technical description of this task.

We will denote the Hilbert space corresponding to quantum system $A$ as $\mathcal{H}_A$ and the vector space of operators on the Hilbert space $\mathcal{H}_A$ as $\mathcal{B(H}_A)$. We will also follow the convention of denoting the marginal state on the system $A$ of a multipartite state $\ket{\Psi}_{ABC\dots}$ as $\Psi_A$ throughout this work. In the following definition,  families of quantum channels defined on $\mathcal{B}(\mathcal{H}_A)$ with the form $\{\mathcal{E}_\psi\}$ will be considered, where the index $\psi$ can be arbitrary $d$-dimensional quantum state. 

\begin{definition}[Private state transfer]
A family of quantum channels $\{\mathcal{E}_\psi\}$ is said to encode quantum state $\psi$ into a bipartite state $\ket{\Psi}_{AB}$ for $d$-dimensional faithful one-shot PST if $\Phi_\psi(\Psi_A)$ is constant regardless of $\psi$ and there exists a unitary operator $M$ on $\mathcal{H}_A \otimes \mathcal{H}_B$ such that $\Tr_B(M^\dag(\Phi_\psi \otimes \mathcal{I}_B)(\dyad{\Psi}_{AB})M)=\psi$. 
\end{definition}

We say that $\ket{\Psi}_{AB}$ given above is used as quantum one-time pad for faithful one-shot private transfer of $d$-dimensional quantum state. We will drop the modifiers `faithful' and `one-shot' in the following unless it is necessary. 

By definition, both marginal states of the output bipartite state of a PST process should be independent of the input state. Therefore, every PST process stopped before actual transmission is a $((2,2))$-threshold QSS process where $((k,n))$-threshold QSS scheme is a process that encodes an arbitrary quantum state into a $n$-partite quantum state such that only subset of $n$ parties with size larger than $k-1$ can restore the encoded secret quantum state. It was proven that \cite{cleve1999share, gottesman2000theory} only the schemes with $n/2 < k \leq n$ are allowed by the no-cloning theorem and that secret sharing through pure $n$-partite state is possible only for $((k,2k-1))$-threshold QSS schemes.

Especially the impossibility of pure $((2,2))$-threshold QSS schemes (named `masking quantum information' or `quantum masking') was recently rediscovered under the name of the no-masking theorem \cite{modi2018masking}. Subsequently two approaches to circumvent the no-masking theorem have emerged. One is to keep the pureness of the output state and to restrict the set of quantum states to be `masked' (meaning hidden from two local parties) \cite{li2018masking,liang2019complete}. Another is to give up the pureness while keeping the universality, the property of being able to mask any quantum state, by employing the source of randomness \cite{lie2019unconditionally,lie2020randomness}, which is required for any reversible mixed process by the result of Nayak and Sen \cite{nayak2007invertible}.

When it is necessary to distinguish them, we will call the former schemes as unitary masking processes and the latter as randomized masking processes. Note that randomized masking is different from probabilistic masking, which is recently proved to be impossible as well \cite{li2020probabilistic}. Since our focus in this work is on universal processes, when we refer to masking processes without a modifier, it will be the randomized masking processes.

A randomized masking process $\mathcal{T}(\rho):\mathcal{B(H}_A) \to \mathcal{B(H}_A \otimes \mathcal{H}_B)$ should have the following form for any input state $\rho$ \cite{nayak2007invertible},
\begin{equation} \label{eqn:maskdef}
    \mathcal{T}(\rho)= V(\rho \otimes \zeta)V^\dag,
\end{equation}
with a unitary operator $V$ on $\mathcal{H}_A \otimes \mathcal{H}_B$ and some mixed state $\zeta$ acting as a source of randomness, which will be called as the `safe state' of the masking process \cite{lie2019unconditionally}. It is called safe state in the sense that quantum information is securely masked as if it is being stored safely in a virtual safe. The masking process masks quantum information in the sense that both $\Tr_A\mathcal{T}(\rho)$ and $\Tr_B\mathcal{T}(\rho)$ are constant maps regardless of the input state $\rho$.

The equivalence of quantum masking and PST is evident from that both are simply two different expressions of general $((2,2))$-threshold QSS schemes. However, the equivalence of the measures for randomness source of quantum masking and quantum one-time pad of PST can be explicitly shown in the following way. It is known that any purification of $((2,2))$-threshold QSS scheme is a $((2,3))$-threshold QSS scheme \cite{cleve1999share}, and by discarding any one share of a $((2,3))$-threshold QSS scheme one obtains a $((2,2))$-threshold QSS scheme. A purification of the arbitrary masking process $\mathcal{T}(\rho)$ given in Eqn. (\ref{eqn:maskdef}) can be obtained by purifying the safe state $\zeta$, i.e.

\begin{equation}
    (V_{AB} \otimes \mathds{1}_C) \left( \rho_A \otimes \dyad{Z}_{BC}\right) (V_{AB}^\dag \otimes \mathds{1}_C),
\end{equation}
where $\ket{Z}_{BC}$ is a purification of the mixed state $\zeta$. Since this is a $((2,3))$-threshold QSS scheme, by tracing out the system $A$, one gets another $((2,2))$-threshold QSS scheme of the form
\begin{equation}
    (\Phi_\rho \otimes \mathcal{I}_C)(\dyad{Z}_{BC}),
\end{equation}
with $\Phi_\rho(\sigma)\equiv \Tr_A\left[V(\rho\otimes \sigma)V^\dag\right]$ defined for every quantum state $\rho$, which exactly fits the definition of PST. One can easily check that a similar argument holds for the converse case. Therefore the R\'{e}nyi entanglement entropy of the quantum one-time pad $\ket{Z}$ is the same with the R\'{e}nyi entropy of the safe state $\zeta$. Thus, by lower bounding the former, one can also lower bound the latter.

In the following theorem, we give a necessary and sufficient condition that a pure bipartie quantum state $\ket{\Psi}_{AB}$ should satisfy to be able to be used for this task.

\begin{theorem} \label{thm1}
A bipartite state $\ket{\Psi}_{AB}$ can be used as a quantum one-time pad for faithful private transfer of $d-$dimensional quantum state if and only if $S_{\min{}}(\Psi_A)\geq \log_2 d$.
\end{theorem}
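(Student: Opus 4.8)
The plan is to prove the two implications after reducing, via the QSS/masking equivalence set up above, to the single statement that $\ket{\Psi}_{AB}$ is an admissible pad for faithful $d$-dimensional PST if and only if every eigenvalue of $\Psi_A$ is at most $1/d$, i.e. $\|\Psi_A\|\le 1/d$, which is exactly $S_{\min}(\Psi_A)\ge\log_2 d$. \emph{For sufficiency}, suppose $\|\Psi_A\|\le 1/d$. Then the ordered vector of squared Schmidt coefficients of $\ket{\Psi}_{AB}$, whose entries are all $\le 1/d$ and sum to $1$, is majorized by the flat vector $(\tfrac1d,\dots,\tfrac1d)$ with $d$ entries, so by Nielsen's theorem~\cite{nielsen1999conditions} Alice and Bob can convert $\ket{\Psi}_{AB}$ into a maximally entangled state $\ket{\Phi_d}$ of Schmidt rank $d$ by LOCC. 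This is the deterministic randomness-extraction step: a non-uniform but sufficiently min-entropic pad is rotated, by a single local POVM of Alice's followed by a local correction of Bob's labelled by the (classical) outcome, into a uniform one. It dovetails with the PST template because Alice may perform the POVM on $A$ \emph{before} the secret enters, so its outcome is independent of $\psi$; she records it in a classical register inside $\mathcal{H}_A$, transmits it together with $A$, and Bob applies the matching correction to $B$ on receipt and then decodes. Privacy is untouched because the transmitted register factorizes as (fixed ciphertext marginal) $\otimes$ (fixed outcome distribution), independent of $\psi$. It then remains to exhibit faithful PST of a $d$-dimensional secret using $\ket{\Phi_d}$, which is a standard threshold-quantum-secret-sharing style encoding with constant marginals and exact recovery.

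\emph{For necessity}, note first that since the secret is unknown the encoding must take the form $\Phi_\psi(\cdot)=\mathcal{G}(\psi\otimes\cdot\,)$ for a fixed channel $\mathcal{G}$, hence is linear in $\psi$. I would then feed a rank-$d$ maximally entangled reference state $\ket{\phi}_{RA_0}$ in as the secret. Linearity together with the constancy of $\Phi_\psi(\Psi_A)$ forces, after averaging over $R$, the transmitted system to decouple from the reference: the post-encoding state obeys $\omega_{RA}=\tau_A\otimes\tfrac1d\mathds{1}_R$, with $\tau_A$ the eavesdropper's state. Exactness of recovery forces $R$ to become maximally entangled of Schmidt rank $d$ with $A$ after Bob's recovery unitary; that system being pure then tensors off $B$, and unwinding gives $\omega_{RAB}=M_{AB}\big(\dyad{\phi}_{RA}\otimes\rho_B\big)M_{AB}^{\dag}$ for some state $\rho_B$, with a purifying ancilla $C$ on Alice's side. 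Since Alice's encoding never touches $B$ we have $\omega_B=\Psi_B$, so $\|\Psi_A\|=\|\Psi_B\|=\|\omega_B\|$; and running the same analysis with the three shares $A,B,C$ of the pure state $\ket{\omega}_{RABC}$ permuted shows $R$ is separately decoupled from each of $A,B,C$ while any pair recovers $R$ maximally, whence, by comparing complementary marginals in $\ket{\omega}_{RABC}$, $\|\omega_{BC}\|=\|\omega_A\|/d$, $\|\omega_{AC}\|=\|\omega_B\|/d$, and $\|\omega_{AB}\|=\|\omega_C\|/d$.

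Turning these spectral relations into $\|\Psi_A\|\le 1/d$ is where the real work lies, and it is the precise point at which the known bounds no longer suffice: the argument of Refs.~\cite{imai2005information,lie2019unconditionally} extracts only $S(\Psi_A)\ge\log_2 d$ from strong subadditivity and is insensitive to the largest eigenvalue. One needs a genuinely one-shot ingredient --- e.g. that the three shares form an \emph{exact} one-erasure-correcting (Knill--Laflamme) configuration, or a second application of Nielsen's theorem / a majorization argument to the decoupled state, or a direct operator-norm inequality among $\omega_A,\omega_B,\omega_C$ --- to upgrade the von Neumann bound to the min-entropy one. Concretely I would combine the rank/norm bookkeeping above with the positive semidefiniteness of the block operator $\sum_{ij}\dyad{e_i}{e_j}_R\otimes\xi_{ij}$, where the $\xi_{ij}$ are the Choi blocks of the encoding and the diagonal blocks $\xi_{ii}$ have mutually orthogonal supports because an orthonormal secret basis is recovered exactly; that positivity is the structural fact that still carries information after both marginals have been pinned down, and I expect the tight bound to fall out of it. This last step is the main obstacle.
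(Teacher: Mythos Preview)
Your sufficiency direction is correct and is essentially the paper's argument: Nielsen's theorem gives one-way LOCC from $\ket{\Psi}_{AB}$ to a rank-$d$ maximally entangled state, Alice stores the measurement outcome in an auxiliary classical register and transmits it along with $A$, Bob applies the matching correction, and then the standard maximally-entangled encoding/decoding is invoked. The paper phrases this via the map $\Xi(\Psi_A)=\Phi_A\otimes\sigma_C$ and then composes with the known encoding $\{\Theta_\psi\}$ for the uniform marginal $\Phi_A$, but the content is the same.

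Your necessity direction has a genuine gap, which you yourself flag. Working with the global purification $\ket{\omega}_{RABC}$ and the operator-norm identities coming from complementary marginals does not close: the relations $\|\omega_{RA}\|=\|\omega_{BC}\|$, $\|\omega_{RB}\|=\|\omega_{AC}\|$, $\|\omega_{RC}\|=\|\omega_{AB}\|$ are symmetric in $A,B,C$ and only tell you that each of $\|\omega_A\|,\|\omega_B\|,\|\omega_C\|$ equals $d$ times the norm of the complementary pair, which is a tautology once you know $R$ decouples from each single share. No amount of bookkeeping among these six norms will produce an absolute bound $\le 1/d$; a further structural input is needed, as you suspected.

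The paper's missing ingredient is to \emph{condition on the eigenbasis of the safe state} rather than to keep the full Stinespring ancilla. In the masking picture (equivalent to PST as discussed before the theorem) the safe state is $\Psi_A=\sum_i p_i\dyad{i}$, and for each $i$ the masking unitary applied to $\rho\otimes\dyad{i}$ produces a \emph{pure} state $\ket{\tau^{(i)}}_{RAB}$ when the secret is half of a maximally entangled pair with reference $R$. For a pure tripartite state one has the identity
\[
I(R{:}A)_{\tau^{(i)}}+I(R{:}B)_{\tau^{(i)}}=2S(R)=2\log_2 d .
\]
On the other hand, the masking constraints $\sum_i p_i\,\tau^{(i)}_{RX}=\tfrac{\mathds{1}_R}{d}\otimes\omega_X$ for $X\in\{A,B\}$ give the operator inequality $\tau^{(i)}_{RX}\le \tfrac{1}{p_i}\,\tau^{(i)}_R\otimes\omega_X$, hence $I(R{:}X)_{\tau^{(i)}}\le D\!\left(\tau^{(i)}_{RX}\,\middle\|\,\tau^{(i)}_R\otimes\omega_X\right)\le -\log_2 p_i$ for each $i$ and each $X$. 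Combining, $2\log_2 d\le -2\log_2 p_i$, i.e. $p_i\le 1/d$ for every eigenvalue, which is exactly $S_{\min}(\Psi_A)\ge\log_2 d$. This is the content of the result the paper cites from Ref.~\cite{lie2020randomness}. The point you were missing is that the ``one-shot'' leverage does not come from Knill--Laflamme or a second Nielsen step on the global state, but from slicing the randomness source into its pure spectral branches so that the tripartite entropy identity becomes available branch-by-branch; the privacy constraint then supplies a per-branch mutual-information ceiling of $-\log_2 p_i$.
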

\begin{proof}
Suppose that $S_\text{min}(\Psi_A)\geq \log_2 d$. This is equivalent to that the spectrum of $\Psi_A$ is majorized by $(1/d,\dots,1/d,0,\dots)$. If $\ket{\Phi}_{AB}$ is a maximally entangled state on some $d-$dimensional subspaces of $\mathcal{H}_A$ and $\mathcal{H}_B$, by the proof of Nielsen's theorem \cite{nielsen1999conditions} given in Ref. \cite{donald2002uniqueness}, there exists a one-way LOCC superoperator $\Lambda$ on $\mathcal{B}(\mathcal{H}_A \otimes \mathcal{H}_B)$ given in the form of
\begin{equation} \label{eqn:lambda}
    \Lambda(\omega) = \sum_i (K_i \otimes U_i)\omega(K_i^\dag \otimes U_i^\dag)
\end{equation}
where $\{K_i\}$ forms the set of Kraus operators, i.e. $\sum_i K_i^\dag K_i = \mathds{1}_A$ and $U_i$ are unitary operators acting on $\mathcal{H}_B$, such that $\Lambda(\dyad{\Psi}_{AB})=\dyad{\Phi}_{AB}$. We modify this superoperator so that the classical communication from Alice to Bob is suspended and stored in a data storage of Alice, i.e. we extend the superoperator $\Lambda$ to $\tilde{\Lambda} : \mathcal{B}(\mathcal{H}_A\otimes \mathcal{H}_B)\to \mathcal{B}(\mathcal{H}_A\otimes \mathcal{H}_B \otimes \mathds{C}^m)$ for some $m$ (we will name the system $\mathds{C}^m$ as $C$)  given as
\begin{equation} \label{eqn:tilde}
    \tilde{\Lambda}(\omega)=\sum_i (K_i \otimes U_i)\omega(K_i^\dag \otimes U_i^\dag) \otimes \dyad{i}
\end{equation}
and similarly define $\Xi : \mathcal{B}(\mathcal{H}_A)\to \mathcal{B}(\mathcal{H}_A \otimes \mathds{C}^m)$ as
\begin{equation}
    \Xi(\omega) = \sum_i K_i \omega K_i^\dag \otimes \dyad{i}.
\end{equation}
Note that $\Lambda(\dyad{\Psi}_{AB})=\Tr_C \circ  \tilde{\Lambda}({\dyad{\Psi}_{AB}})=\dyad{\Phi}_{AB}$ and $\dyad{\Phi}_{AB}$ is a pure state, therefore $\tilde{\Lambda}({\dyad{\Psi}_{AB}})$ should have a form of $\dyad{\Phi}_{AB} \otimes \sigma_C$, with some $\sigma$ i.e. other systems should be decoupled from a system in a pure state. Then we can see that $\Xi(\Psi_A)=\Phi_A \otimes \sigma_C$ for some mixed state $\sigma$ because $\Xi(\Psi_A)=\Tr_B \circ \tilde{\Lambda}({\dyad{\Psi}_{AB}})$. 

Since $\Phi_A$ is a quantum state with uniform nonzero eigenvalues $1/d$, there exists \cite{lie2019unconditionally, lie2020randomness} a family of secret encoding maps $\{\Theta_\psi\}$ of $d$-dimensional quantum states such that $\Theta_\psi(\Phi_A)$ is constant for all $\psi$. If we let $\{(\Theta_\psi \otimes \mathcal{I}_C)\circ \Xi\}$ as the family of secret encoding maps acting on $A$ of $\ket{\Psi}_{AB}$, we get the wanted result. The secret can be restored by first applying $U_i$ on $B$ conditioned on $C$ followed by discarding the system $C$ and applying the restoring map for $\{\Theta_\psi\}$ on $AB$. The first step transforms $\ket{\Psi}_{AB}$ to $\ket{\Phi}_{AB}$, which is legitimate safe-key state \cite{lie2019unconditionally, lie2020randomness} of $\{\Theta_{\psi}\}$ so that the second step works.

Conversely, suppose that there exists a family of secret encoding maps $\{\Theta_\psi\}$ of $d-$dimensional quantum states acting on $A$ of $\ket{\Psi}_{AB}$. It is equivalent to that there exists a quantum masking process that uses $\Psi_A$ as the safe state \cite{lie2019unconditionally}. Therefore, according to Eqn. (9) of Ref. \cite{lie2020randomness}, every eigenvalue $p_i$ of $\Psi_A$ must not be larger than $1/d$ \bibnote{Since $I(R:A)_{\tau_{RA}}+I(R:B)_{\tau_{RB}}=2\log d$ and $\max\{I(R:A)_{\tau_{RA}},I(R:B)_{\tau_{RB}}\}\leq -\log p_i$, we get $\log d \leq -\log p_i$ for all $i$.}. Since it is equivalent to $S_\text{min}(\Psi_A)\geq \log_2 d$, the wanted result is obtained.
\end{proof}
This result generalizes the result of Ref. \cite{gour2004faithful} since the quantum teleportation is a special case of PST. Note that any eavesdropper of the classical communication in a teleportation protocol without sharing initial entanglement cannot gain any information of the teleported quantum state. Also, Theorem \ref{thm1} shows that the faithful teleportation protocol given in Ref. \cite{gour2004faithful} is not only an optimal teleportation protocol, but also an optimal PST protocol in the sense that the protocol consumes the 
minimal amount of entangled state without any leftover entanglement and does not require quantum channel between Alice and Bob for the secret recovery.

If we define the one-shot PST power of $\ket{\Psi}_{AB}$ as the maximal size of transferable quantum state by using state $\ket{\Psi}_{AB}$ counted in qubits as $P_p(\ket{\Psi}_{AB}) \equiv \log_2 \lfloor 2^{S_{\min{}}(\Psi_A)}\rfloor$, we have the following result.
\begin{corollary}
    The one-shot PST power for bipartite states is superadditive, i.e. $P_p(\ket{\Psi}\otimes\ket{\Phi})\geq P_p(\ket{\Psi})+P_p(\ket{\Phi})$.
\end{corollary}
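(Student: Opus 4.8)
The plan is to reduce the corollary to two elementary facts: additivity of the min-entropy under tensor products, and super-multiplicativity of the floor function. First I would view $\ket{\Psi}_{AB}\otimes\ket{\Phi}_{A'B'}$ as a pure state shared between the joint sender system $AA'$ and the joint receiver system $BB'$; its marginal on the sender side is $\Psi_A\otimes\Phi_{A'}$. Because the largest eigenvalue of a tensor product of positive operators factorizes, $\|\Psi_A\otimes\Phi_{A'}\|=\|\Psi_A\|\,\|\Phi_{A'}\|$, and hence
\begin{equation}
S_{\min{}}(\Psi_A\otimes\Phi_{A'})=S_{\min{}}(\Psi_A)+S_{\min{}}(\Phi_{A'}).
\end{equation}
By Theorem~\ref{thm1} the maximal dimension of a state that a pure bipartite state $\ket{\Psi}_{AB}$ can privately transfer is exactly $\lfloor 2^{S_{\min{}}(\Psi_A)}\rfloor$, i.e. the argument of the logarithm in the definition of $P_p$; applying this to $\ket{\Psi}\otimes\ket{\Phi}$ and using the additivity above gives $P_p(\ket{\Psi}\otimes\ket{\Phi})=\log_2\lfloor 2^{S_{\min{}}(\Psi_A)+S_{\min{}}(\Phi_{A'})}\rfloor$.

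Next I would establish the purely arithmetic inequality $\lfloor 2^{x+y}\rfloor\ge\lfloor 2^{x}\rfloor\lfloor 2^{y}\rfloor$ for reals $x=S_{\min{}}(\Psi_A)\ge 0$ and $y=S_{\min{}}(\Phi_{A'})\ge 0$. Since $2^{x}\ge\lfloor 2^{x}\rfloor$ and $2^{y}\ge\lfloor 2^{y}\rfloor$, we have $2^{x+y}\ge\lfloor 2^{x}\rfloor\lfloor 2^{y}\rfloor$, and as the right-hand side is a non-negative integer it bounds $\lfloor 2^{x+y}\rfloor$ from below. Taking base-$2$ logarithms then yields $P_p(\ket{\Psi}\otimes\ket{\Phi})\ge P_p(\ket{\Psi})+P_p(\ket{\Phi})$.

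There is no real obstacle here; the only subtle point is to explain why the inequality is generically strict rather than tight. The gap is entirely a rounding effect: when $S_{\min{}}(\Psi_A)$ is non-integer, $P_p(\ket{\Psi})$ throws away a fractional qubit, whereas $P_p$ of several copies discards strictly less since the fractional parts can accumulate. I would make this concrete with a short remark — for instance a marginal with spectrum $(2/5,2/5,1/5)$, for which a single copy gives $P_p=\log_2 2=1$ while two copies give $P_p=\log_2\lfloor 25/4\rfloor=\log_2 6>2$ — to show that the superadditivity is not merely formal.
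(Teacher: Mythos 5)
Your proof is correct and is essentially the argument the paper intends (the corollary is stated without an explicit proof, but it follows exactly as you say from the additivity of the min-entropy under tensor products and the integrality bound $\lfloor 2^{x+y}\rfloor\geq\lfloor 2^{x}\rfloor\lfloor 2^{y}\rfloor$). Your closing example with spectrum $(2/5,2/5,1/5)$ is a nice addition showing the inequality can be strict.
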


\section{Quantum Masking}

From the equivalence and duality of quantum masking and PST, we can similarly define the masking power of a quantum state $\sigma$ as $P_m(\sigma)\equiv\lfloor2^{S_{\min{}}(\sigma)}\rfloor$. We will say that a quantum state $
\sigma$ can mask $d$-dimensional quantum information when it is used as the safe state of a $d$-dimensional randomized quantum masking process. The main result implies the following.

\begin{corollary} \label{coro:mask}
    A quantum state $\sigma$ can mask $d$-dimensional quantum information if and only if $S_{\min{}}(\sigma)\geq \log_2 d.$ Moreover, the masking power quantum state is superadditive, i.e. $P_m(\sigma_1 \otimes \sigma_2) \geq P_m(\sigma_1) + P_m(\sigma_2).$
\end{corollary}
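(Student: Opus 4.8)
\emph{Proof proposal.} The plan is to obtain Corollary~\ref{coro:mask} directly from Theorem~\ref{thm1} together with the masking--PST equivalence established in the paragraphs preceding that theorem. For the ``if and only if'' part, I would start from an arbitrary randomized masking process of the form~(\ref{eqn:maskdef}) with safe state $\zeta=\sigma$, purify $\sigma$ to a pure bipartite state $\ket{Z}_{BC}$ with $Z_B=\sigma$, and invoke the $((2,3))\!\to\!((2,2))$ reduction: purifying the safe state and then tracing out the system $A$ converts the masking process into a faithful PST protocol whose quantum one-time pad is precisely $\ket{Z}_{BC}$, with encoding channels $\Phi_\rho(\cdot)=\Tr_A[V(\rho\otimes\cdot)V^\dag]$ acting on $B$. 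Conversely, any faithful $d$-dimensional PST protocol on $\ket{Z}_{BC}$ yields, by the same reduction run backwards, a randomized masking process using $\sigma=Z_B$ as its safe state. Hence $\sigma$ can mask $d$-dimensional quantum information iff $\ket{Z}_{BC}$ is a valid quantum one-time pad for faithful $d$-dimensional PST, which by Theorem~\ref{thm1} holds iff $S_{\min}(Z_B)=S_{\min}(\sigma)\geq\log_2 d$.

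For the superadditivity claim I would use that the min-entropy is additive under tensor products, $S_{\min}(\sigma_1\otimes\sigma_2)=S_{\min}(\sigma_1)+S_{\min}(\sigma_2)$, which is immediate from $\|\sigma_1\otimes\sigma_2\|=\|\sigma_1\|\,\|\sigma_2\|$. Writing $d_i=P_m(\sigma_i)=\lfloor 2^{S_{\min}(\sigma_i)}\rfloor$, one has $2^{S_{\min}(\sigma_1\otimes\sigma_2)}=2^{S_{\min}(\sigma_1)}2^{S_{\min}(\sigma_2)}\geq d_1 d_2$, and since $d_1 d_2$ is an integer, $P_m(\sigma_1\otimes\sigma_2)=\lfloor 2^{S_{\min}(\sigma_1\otimes\sigma_2)}\rfloor\geq d_1 d_2\geq d_1+d_2=P_m(\sigma_1)+P_m(\sigma_2)$, the last step using that each $\sigma_i$ masks at least one qubit so that $d_i\geq 2$. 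The same bookkeeping in fact gives the stronger $P_m(\sigma_1\otimes\sigma_2)\geq P_m(\sigma_1)\,P_m(\sigma_2)$, the multiplicative analogue of the one-shot PST power corollary; I would state the additive form as in the corollary but remark on this.

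Since the real work has already been done in Theorem~\ref{thm1} and in the $((2,3))$-purification argument, I do not anticipate a serious obstacle. The one point that needs care is making sure the reduction identifies the correct reduced state: Theorem~\ref{thm1} phrases its condition in terms of the marginal of the one-time pad held by the \emph{sender}, so one must check that under the purify-then-trace-out map this marginal is exactly $\sigma$ (rather than $Z_C$, or a marginal on $A$), so that the threshold $\log_2 d$ carries over unchanged in both directions. The floor functions are harmless but should be tracked explicitly, exactly as in the PST power corollary.
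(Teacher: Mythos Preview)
Your reduction is exactly what the paper intends: the corollary is stated without proof and is meant to follow immediately from Theorem~\ref{thm1} via the masking--PST equivalence (purify the safe state, trace out $A$, identify the one-time pad marginal with $\sigma$), precisely as you lay out. The care you take about which marginal appears is appropriate and resolves correctly.

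There is one genuine slip in the superadditivity argument. The step $d_1 d_2 \geq d_1 + d_2$ requires $d_i \geq 2$, which you justify by asserting ``each $\sigma_i$ masks at least one qubit''; but nothing in the hypothesis guarantees this (take $\sigma_i$ pure, so $d_i=1$, and then $1\not\geq 2$). The fix is to notice that the paper's $P_m(\sigma)=\lfloor 2^{S_{\min}(\sigma)}\rfloor$ is almost certainly missing a $\log_2$ to match the definition of $P_p$; with $P_m(\sigma)=\log_2\lfloor 2^{S_{\min}(\sigma)}\rfloor$ your multiplicative bound $\lfloor 2^{S_{\min}(\sigma_1\otimes\sigma_2)}\rfloor \geq d_1 d_2$ already yields $P_m(\sigma_1\otimes\sigma_2)\geq \log_2 d_1+\log_2 d_2=P_m(\sigma_1)+P_m(\sigma_2)$ with no side condition on the $d_i$. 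So drop the $d_i\geq 2$ assumption and read $P_m$ on the log scale, and your argument is complete and identical in spirit to the paper's (implicit) one for $P_p$.
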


Note that, by appropriately merging unauthorized sets of an arbitrary $((k,n))$-threshold QSS scheme, one can construct a $((2,2))$-threshold scheme \cite{lie2019unconditionally, lie2020randomness}, i.e. quantum masking, therefore Corollary \ref{coro:mask} applies to an arbitrary unauthorized set of $d$-dimensional QSS; i.e. any unauthorized party's marginal state should have the min-entropy larger than or equal to $\log_2 d$.

This result implies that having large von Neumann entropy alone is not enough for masking quantum information. For example, a rank 3 quantum state with the spectrum of $(0.7730,0.1135,0.1135)$ has 1 bit of von Neumann entropy, but since its min-entropy is 0.3716 bits, it cannot mask a qubit of quantum information. On the other hand, a state with the spectrum of $(1/2,1/4,1/8,\dots,1/2^n,1/2^n)$ which has 1 bit of min-entropy can mask a qubit of quantum information even though its randomness is highly non-uniform. The consequences of Theorem \ref{thm1} is not limited to quantum information processing tasks, but also has implications for the field of state transition.

\section{State transition}

For any initial state $\rho$ and the final state $\rho'$, implementing a quantum channel (`transition') $\mathcal{E}$ such that $\mathcal{E}(\rho)=\rho'$ by utilizing randomness has been studied \cite{gour2015resource, boes2018catalytic, scharlau2018quantum}. Every state transition process can be realized if dephasing map can be realized \cite{boes2018catalytic, scharlau2018quantum} due to the Schur-Horn theorem \cite{horn1954doubly}, since every state transition can be decomposed into the initial unitary evolution followed by a dephasing map and the final unitary evolution. Therefore, realizing dephasing maps is the essential part of implementing state transition. The necessary and sufficient condition for the source of randomness (SOR) of dephasing channel can be also obtained from Theorem \ref{thm1}.  Here, we will use a slightly different definition of (catalytic) dephasing map using quantum randomness to encompass the usage of non-perfect SOR \cite{lie2020only}. We will say the map $\mathcal{E}$ dephases with respect to a certain basis $\{\ket{i}\}$ using SOR $\sigma$ with a leftover SOR $\eta$ if there exists an isometry operator (a unitary operator that embeds a smaller Hilbert space into a larger one) $U$ acting on $AB$ such that for any $d$-dimensional quantum state $\rho$,
\begin{equation}
        \mathcal{E}(\rho)=\Tr_B\left[U(\rho \otimes \sigma) U^\dag\right]= \sum_i \bra{i}\rho\ket{i}\dyad{i} \otimes \eta,
\end{equation}
and there exists some quantum state $\tau$ on $B$ so that the complement channel of $\mathcal{E}$ has the form of
\begin{equation}
    \tilde{\mathcal{E}}(\rho) = \Tr_A\left[U(\rho \otimes \sigma) U^\dag\right] = \tau,
\end{equation}
regardless of $\rho$. We will say the use of SOR $\sigma$ is \textit{catalytic} when $\tau$ can also be used for some $d$-dimensional dephasing map with the same property, i.e. it is an infinitely recyclable SOR. One can see that this definition is recursive. We will also say that $\sigma$ (catalytically) dephases $d$-dimensional quantum states for the same situation. The leftover SOR $\eta$ does not cause problems since it can always be stored or discarded independently of the dephasing process 
itself, since it is in a product state with the dephased state of the input $\rho$.  If the second requirement was not imposed, then SOR is not needed at all since a simple CNOT gate can dephase with a pure ancillary system. See Ref. \cite{lie2020only} for a more detailed discussion on this generalized setting.

In Ref. \cite{boes2018catalytic}, only maximally mixed SORs were considered when the minimal randomness bound was derived and only approximate dephasing was considered for potentially non-uniform SORs, but since it is well-known that non-uniform randomness sources can cause security issues \cite{mcinnes1990impossibility, dodis2004possibility}, it is necessary to analyze the power of non-uniform SORs. In the following theorem, we give a necessary and sufficient condition of when a SOR can be used to \textit{exactly} dephase an arbitrary input state.

\begin{theorem}
    A quantum state $\sigma$ can dephase $d^2$-dimensional quantum states catalytically if and only if $S_{\min{}}(\sigma) \geq \log_2 d$.
\end{theorem}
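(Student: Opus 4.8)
The plan is to derive the statement from Theorem~\ref{thm1} (equivalently Corollary~\ref{coro:mask}) together with the known fact that a uniformly distributed randomness source of dimension $d$ catalytically dephases a $d^2$-dimensional system \cite{boes2018catalytic,lie2020only}; the min-entropy will enter exactly as in the proof of Theorem~\ref{thm1}, namely through a deterministic (Nielsen-theorem based) reduction of $\sigma$ to the uniform source on the ``if'' side, and through a mutual-information bound on the ``only if'' side. Throughout I write $\pi_d$ for the maximally mixed state on $\mathbb{C}^d$, and I use the definition of (catalytic) dephasing given above with its parameter $d$ replaced by $d^2$.

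For the ``if'' direction, suppose $S_{\min}(\sigma)\geq\log_2 d$. Purifying $\sigma$ as $\ket{\Psi}_{AB}$ gives $S_{\min}(\Psi_A)=S_{\min}(\sigma)\geq\log_2 d$, so the channel $\Xi$ built from the Nielsen-theorem Kraus operators $\{K_i\}$ in the proof of Theorem~\ref{thm1} deterministically converts $\sigma$ into $\pi_d\otimes\sigma_C$ for some leftover state $\sigma_C$, with the outcome register $C$ holding the (suspended) classical communication. Composing this extraction with the construction of Refs.~\cite{boes2018catalytic,lie2020only} that uses a uniform $\mathbb{C}^d$ source to catalytically dephase $\mathbb{C}^{d^2}$, I obtain an isometry $U$ on the $d^2$-dimensional input together with $B$ that implements the $d^2$-dimensional dephasing map: the leftover SOR $\eta$ absorbs $\sigma_C$ and the register $C$, while the complement channel is constant, outputting a fixed $\tau$ that contains the recycled uniform coin as a tensor factor. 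Since $\tau$ then again satisfies $S_{\min}(\tau)\geq\log_2 d$, the same construction applies to $\tau$, so the recursive condition in the definition of catalytic use is met and the use of $\sigma$ is catalytic.

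For the ``only if'' direction, suppose $\sigma$ catalytically dephases $\mathbb{C}^{d^2}$ with leftover SOR $\eta$, constant complement output $\tau$, and implementing isometry $U$. I would write the $d^2$-dimensional system as $A_1\otimes A_2$ with $\dim A_i=d$, introduce a reference $R\cong\mathbb{C}^d$ maximally entangled with $A_1$ (with $A_2$ in a fixed state), purify $\sigma$ by a system $F$, and track the output $\Upsilon$. The constant-complement condition forces $R$ to be uncorrelated with the environment, $I(R:E)_\Upsilon=0$, while the dephasing-output condition keeps $R$ maximally (classically) correlated with the system register, $I(R:O)_\Upsilon=\log_2 d$, and reversibility of the dilation gives $I(R:OE)_\Upsilon=2\log_2 d$. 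Feeding this into the mutual-information/monogamy bound used in the converse of Theorem~\ref{thm1} (Ref.~\cite{lie2020randomness}, Eqn.~(9), and the footnote there), and iterating it along the catalytic recursion if necessary, I expect to conclude that every eigenvalue $p_i$ of $\sigma$ satisfies $p_i\leq 1/d$, i.e.\ $S_{\min}(\sigma)\geq\log_2 d$. Combined with Corollary~\ref{coro:mask}, this also yields that $\sigma$ catalytically dephases $\mathbb{C}^{d^2}$ if and only if it can mask $d$-dimensional quantum information, which is the ``half the size'' interpretation quoted in the abstract.

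The main obstacle I anticipate is bookkeeping in this equivalence rather than a single hard inequality. On the ``if'' side, one must check that composing the $\pi_d$-extraction with the Boes-type catalytic dephasing yields a complement channel that is \emph{exactly} constant (not merely approximately so), and that the induced leftover SOR is genuinely admissible, so that the catalytic recursion does not break. On the ``only if'' side, the delicate point is to pin down the correct reference dimension and the precise form of the monogamy bound, and in particular to verify that it is the catalytic hypothesis that makes the bound come out as $\|\sigma\|\leq 1/d$ rather than the weaker $\|\sigma\|\leq 1/d^2$ that a naive one-shot argument would give.
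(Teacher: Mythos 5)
Your ``if'' direction is essentially the paper's proof. Purify $\sigma$, use the Nielsen-theorem Kraus operators $\{K_i\}$ from the proof of Theorem~\ref{thm1} to deterministically distill a rank-$d$ uniform state $\Phi_B$ out of $\sigma$, then run the catalytic $d^2$-dimensional dephasing of Ref.~\cite{boes2018catalytic} on $\Phi_B$; the SOR ends up as $\Phi_B\otimes\kappa$ with $\kappa=\sum_i\Tr\left[K_i\sigma K_i^\dag\right]\dyad{i}$, whose min-entropy is still at least $\log_2 d$, which closes the catalytic recursion. The one point you flag but do not resolve is the one the paper is careful about: the measurement record must be written into registers accessible to \emph{both} sides (the isometry $\sum_i \ket{i}_{A'}\otimes\ket{i}_{B'}\otimes K_i$), since otherwise the complement channel is not exactly constant and the leftover is not decoupled from the environment.

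Your ``only if'' direction diverges from the paper, and this is where the gap sits. The paper does not attempt a single-use monogamy argument at all. Instead it notes that if $\sigma$ dephases $d^2$-dimensional states, then \emph{two} copies $\sigma\otimes\sigma$ \emph{mask} $d^2$-dimensional states: dephase with the first copy, apply the $d^2$-dimensional discrete Fourier transform, dephase with the second copy; two dephasings in mutually unbiased bases send every input to the maximally mixed state while the SOR ends in a $\rho$-independent state. Corollary~\ref{coro:mask} then forces $S_{\min}(\sigma\otimes\sigma)\geq \log_2 d^2$, and additivity of the min-entropy gives $S_{\min}(\sigma)\geq\log_2 d$. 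This is precisely what produces the factor of one half, and it avoids the bookkeeping you are (rightly) worried about. Your route, as written, is not yet a proof: the bound of Eqn.~(9) of Ref.~\cite{lie2020randomness} is derived for masking processes, where both output marginals are input-independent, whereas in a dephasing the system register stays fully classically correlated with the reference, so the information-balance identity you want to feed the eigenvalues into is different and you have not derived it. Two further cautions: $\|\sigma\|\leq 1/d^2$ is a \emph{stronger} conclusion than $\|\sigma\|\leq 1/d$, not a weaker one, and it is false here (the ``if'' direction exhibits $\sigma$ with $\|\sigma\|=1/d$), so if your accounting lands there it signals an error; and the paper's necessity argument never invokes catalyticity---a single non-catalytic dephasing already forces $S_{\min}(\sigma)\geq\log_2 d$---so you should not expect the catalytic hypothesis to rescue the constant.
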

\begin{proof}
    Suppose that $\sigma$ can dephase $d^2$-dimensional quantum states. Then, by using two copies of $\sigma$, i.e. $\sigma \otimes \sigma$, one can mask any $d^2$-dimensional quantum state $\rho$ by dephasing it into two mutually unbiased bases. For example, one can use one $\sigma$ to dephase $\rho$ and applying the ($d^2$-dimensional) discrete Fourier transform gate (defined as $\sum_{n,m=1}^{d^2}\exp(i{2\pi nm}/{d^2}) \dyad{n}{m}$) to the output. Then, by using the other $\sigma$, one can dephase the output state with respect to the same basis. The final output state is the $d^2$-dimensional maximally mixed state for every input state $\rho$ (with some leftover SOR in a product state ). Since the SOR $\sigma \otimes \sigma$ is also transformed into a quantum state that is independent of $\rho$, the whole process is a randomized masking process. Therefore, by Corollary \ref{coro:mask}, $S_{\min{}}(\sigma \otimes \sigma) \geq 2\log_2 d$. By the additivity of the min-entropy, we get $S_{\min{}}(\sigma) \geq \log_2 d$.
    
    Conversely, assume that $S_{\min{}}(\sigma) \geq \log_2 d$. If we pick a purification $\ket{\Sigma}$ of $\sigma$, then one can replace $\ket{\Psi}$ in the proof of Theorem \ref{thm1} with $\ket{\Sigma}$ since it majorizes a $d$-dimensional maximally entangled state $\ket{\Phi}$. Therefore we will use the corresponding Kraus operators $\{K_i\}$ defined in the same way with Eqn. (\ref{eqn:tilde}). Then, we apply the isometry operator $\sum_i \ket{i}_{A'} \otimes \ket{i}_{B'} \otimes K_i$ ($A'$ and $B'$ belongs to Alice and Bob respectively, and $K_i$ acts on $B$, the same system with that of $\sigma$) to $\sigma$. Then, for both Alice and Bob, who has no access to $B'$ and $A'$ respectively, the system $B$ is uncorrelated to their primed systems ($A'$ and $B'$) and the system $B$ is in the rank $d$ uniformly mixed state $\Phi_B$. (See the proof of Theorem \ref{thm1}.) Then, by applying the optimal $d^2$-dimensional dephasing unitary given in Ref. \cite{boes2018catalytic}, which uses the state $\Phi_B$ as a catalyst, we can realize a $d^2$-dimensional dephasing map with the leftover SOR $\kappa \equiv \sum_i \Tr\left[K_i\sigma K_i^\dag\right]\dyad{i}$, and the SOR used in this process is transformed into $\Phi_B \otimes \kappa_{B'}$. Since $S_{\min{}}(\Phi_B \otimes \kappa_{B'}) \geq S_{\min{}}(\Phi_B) = \log_2 d$, this SOR can be used again for another $d^2$-dimensional dephasing map, therefore $\sigma$ was used catalytically in this process.
\end{proof}

The catalyst used in this process has transformed from $\sigma$ to $\Phi_B\otimes \kappa_{B'}$. We remark that, however, the catalyst's min-entropy never decreases during the process. This follows from that, for all $i$,
\begin{equation}
    2^{-S_{\min{}}(\Phi_B \otimes \kappa_{B'})}= \max_i \frac{1}{d}\Tr\left[K_i \sigma K_i^\dag \right] \leq 2^{-S_{\min{}}(\sigma)},
\end{equation}
where the first equality follows from that $2^{-S_{\min{}}(\Phi_B \otimes \kappa_{B'})}$ is the largest eigenvalue of $\Phi_B \otimes \kappa_{B'}$, and the second inequality follows from that $\sigma \leq 2^{-S_{\min{}}(\sigma)} \mathds{1}$ and $K_i^\dag K_i \leq \Pi_{\text{supp}(K_i)}$, where $\Pi_{\text{supp}(K_i)}$ is the orthogonal projector onto the support of $K_i$ (the orthogonal complement of the kernel of $K_i$), and that $\Tr\Pi_{\text{supp}(K_i)}=d$. Thus no consumption of randomness in terms of min-entropy happens in the process.

One can even recover full catalycity, if decoherence is allowed, by applying the following controlled unitary to $\Phi_B\otimes \kappa_{B'}$ given as $\sum_i U_i^\dag \otimes \dyad{i}$ with  the unitaries $\{U_i\}$ of the proof of Theorem \ref{thm1} and discarding the latter system. This is because $\sum_i \Tr\left[K_i \sigma K_i^\dag\right]U_i^\dag\Phi_BU_i=\sigma$. Therefore, as it was assumed in \cite{boes2019neumann}, if the $\kappa_{B'}$ undergoes decoherence and is dephased with respect to a basis unbiased from its eigenbasis, we can see that the catalyst goes back to its original form, $\sigma$, with some uncorrelated leftover SOR, the state $\kappa$ after being dephased.

On the other hand, if one wants to remove the leftover SOR $\kappa$ completely to follow the conventional formalism of catalytic quantum randomness \cite{boes2018catalytic}, note that a simple projective measurement on $\kappa$ will collapse it into a pure state and leave the system $A$ in the dephased state without the leftover SOR (up to local unitary) and the catalyst in the state $\Phi_B$, which is the standard form of catalyst \cite{boes2018catalytic}, \textit{deterministically}, regardless of the measurement outcome. This is because  both $\kappa_{A'}$ and $\kappa_{B'}$ are completely decoupled with the system $A$ and $B$, respectively. In this sense, one can say that the initial catalyst $\sigma$ was actually \textit{precatalyst}, a compound that is converted into catalyst during the chemical reaction, since it produces the true catalyst $\Phi_B$ in the course of interaction.

\section{Conclusion}

Weak randomness sources with non-uniform probability distribution often causes nonzero probability of failure \cite{von195113}, sometimes to an irremediable extent \cite{mcinnes1990impossibility, dodis2004possibility} when their distribution is not fixed. We showed, however, for a fixed source of randomness or entanglement, non-perfect resources can yield \textit{deterministic} security when their min-entropy is larger than the size of the secret they are hiding.

Aside from the results on quantum information processing tasks, we showed the power of non-uniform randomness sources as a catalyst, which could be interpreted as a thermal machine \cite{muller2018correlating} in quantum thermodynamics context, when it comes to state transition. By utilizing the Nielsen theorem \cite{nielsen1999conditions}, which was initially applied to entanglement extraction, we showed that non-uniform randomness sources can also serve as catalyst and provided a necessary and sufficient condition for when it is possible. Dephasing map can be also be understood as a quantum masking process of an observable. Our result shows that the randomness cost of masking one observable is the half of the cost of masking the whole quantum information in a system. This result solidifies the intuition that the information of one observable amounts to the half of the quantum information in the same system.

Anticipated future work is to extend this result for arbitrary mixed quantum one-time pads. Unlike the results for safe states and SOR for dephasing maps, the result given in this work is not fully general for quantum one-time pads since we only considered pure quantum one-time pads. 

\begin{acknowledgments} 
This work was supported by the National Research Foundation of Korea (NRF) through grants funded by the the Ministry of Science and ICT (Grants No. NRF-2019M3E4A1080074 and No. NRF-2020R1A2C1008609).
\end{acknowledgments} 

\bibliography{main}

\end{document}